\newtheorem{theorem}{Theorem}
\newtheorem{lemma}{Lemma}
\definecolor{mygreen}{RGB}{0,0,255}
\definecolor{myblue}{RGB}{0,255,0}
\algrenewcommand\alglinenumber[1]{\textbf{#1.}}
\newcommand{\StateCont}[1]{%
  \Statex\hspace*{\dimexpr\ALG@thistlm\relax}#1%
}
\title{Scalable Hessian-free Proximal Conjugate Gradient Method for Nonconvex and Nonsmooth Optimization}
\name{Yiming Zhou and Wei Dai}
\address{Department of Electrical and Electronic Engineering, {Imperial College London}, UK}
\begin{document}
\ninept
\maketitle
\begin{abstract}
This work studies a composite minimization involving a differentiable function $q$ and a nonsmooth function $h$, both may be nonconvex. This problems is ubiquitous in signal processing and machine learning yet remains challenging to solve efficiently, particularly when large-scale instances, poor conditioning, and nonconvexity coincide. To address these challenges, we propose a proximal conjugate gradient method (PCG) that matches the fast convergence of proximal (quasi-)Newton algorithms while reducing computation and memory complexity, and is especially effective for spectrally clustered Hessians. Our key innovation is to form, at each iteration, an approximation to the Newton direction based on CG iterations to build a majorization surrogate. We define this surrogate in a curvature-aware manner and equip it with a CG-derived isotropic weight, guaranteeing majorization of a local second-order model of $q$ along the given direction. To better preserve majorization after the proximal step and enable further approximation refinement, we scale the CG direction by the ratio between the Cauchy step length and a stepsize derived from the largest Ritz value of the CG tridiagonal. All curvature is accessed via Hessian–vector products computed by automatic differentiation, keeping the method Hessian-free. Convergence to first-order critical points is established. Numerical experiments on CS-MRI with nonconvex regularization and on dictionary learning, against benchmark methods, demonstrate the efficiency of the proposed approach.
\end{abstract}
\begin{keywords}
Nonconvex and nonsmooth optimization, proximal conjugate gradient methods, acceleration.
\end{keywords}

\section{Introduction}

\label{sec:intro}
In this paper, we consider the optimization problem
\begin{equation}
    \min_{\bm{x} \in \mathbb{R}^n} f(\bm{x}) := q(\bm{x}) + h(\bm{x}),
    \label{eq:objective-1}
\end{equation}
where $q:\mathbb{R}^n \to \mathbb{R} \in C^2$ is proper with an $L$-Lipschitz continuous gradient (possibly nonconvex), and $h:\mathbb{R}^n \to \mathbb{R}\cup\{+\infty\}$ is proper and lower semicontinuous (possibly nonconvex and nonsmooth) with an efficiently computable proximal operator, meaning that for any proximal radius $\tau>0$, $\operatorname{prox}_{\tau h}(\bm{y})
:= \arg\min_{\bm{x}\in\mathbb{R}^n}\left\{\, h(\bm{x}) + \frac{1}{2\tau}\|\bm{x}-\bm{y}\|^2 \right\}$ can be evaluated easily. We further assume that $f(\bm{x})$ is  coercive, i.e., $\lim _{\|\bm{x}\|_{2} \rightarrow \infty} f(\bm{x})=+\infty$, to ensure the existence of a minimizer.

The model \eqref{eq:objective-1} is widely applicable in signal and image processing, machine learning, and related fields. Representative instances include the lasso and related formulations, which typically aim to reconstruct blurred or incomplete data or to perform classification \cite{ting2009sparse,zou2006adaptive,abramovich2018high}. In such problems, $q$ denotes a smooth data-fidelity term, for example, a quadratic or logistic loss for given data, and $h$ serves as a sparsity-inducing regularizer. Common choices include the $\ell_{0}$ penalty, SCAD, and MCP \cite{fan2001variable,blumensath2009iterative,zhang2010nearly}. Extending to the low-rank optimization \cite{chi2019nonconvex}, the smooth term typically has the form $q(\bm{X})=\|\mathcal{A}(\bm{X})-\bm{Y}\|^2$ where $\mathcal{A}(\cdot)$ is a linear operator. Different choices of $\mathcal{A}$ correspond to different tasks, e.g., the subsampling operator for low-rank matrix completion and a Hankel lifting for line spectral estimation from incomplete samples \cite{dai2012geometric,zhou2025efficient,yao2025low}. $h$ usually penalizes the singular values of $\bm{X}$ to promote the low-rank property, e.g., $l_{p}$ norm and the indicator of a rank constraint \cite{recht2010guaranteed,tanner2013normalized,li2020rank}. For the nonconvex $q$, one may consider the bilinear form $q(\bm{U},\bm{V}) = \|\mathcal{A}(\bm{U}\bm{V}^{\top})-\bm{Y}\|^2$ and $h$ can be separable $h(\bm{U},\bm{V}) = h_1(\bm{U})+h_2(\bm{V})$ to impose structure on variables. Examples include nonnegative matrix factorization~\cite{lee2000algorithms}, where $\mathcal{A} = \bm{I}$ and $h$ enforces elementwise nonnegativity, and dictionary learning \cite{tovsic2011dictionary} where $h_1$ constrains atoms to unit $\ell_2$ norm and $h_2$ promotes the sparsity. 


Many algorithms have been developed to solve \eqref{eq:objective-1}. A basic approach is proximal gradient (PG) \cite{combettes2011proximal}, which at each iteration performs a gradient step on $q$ followed by a proximal step for $h$. Given the convexity assumption on $f$, PG is shown to converge globally with a sublinear rate  of $\mathcal{O}(1/k
)$ in terms of the function value, where $k$ is the iteration count. However, like other first-order methods, PG is sensitive to ill-conditioning and non-convexity of the problem, often resulting in slow convergence. A standard remedy is Nesterov acceleration, such as FISTA \cite{beck2017first} for convex $f$. Subsequent work \cite{li2015accelerated} extends this approach to nonconvex objectives by reverting to a proximal gradient step whenever the accelerated step is unsatisfactory. Despite these extensions, formal guarantees for accelerated rates $\mathcal{O}(1/k^2
)$ are, to date, primarily available under convexity. 

Beyond first-order methods, proximal Newton and quasi-Newton schemes have been extensively studied. The work in \cite{lee2014proximal} develops a unified proximal Newton framework and establishes convergence results for both exact and inexact subproblem solutions under appropriate Hessian approximation conditions. Their superlinear convergence guarantees often require the strong convexity of $q$. In \cite{yue2019family}, an inexact proximal Newton method with a regularized Hessian is proposed and the assumptions are further relaxed to convex objectives that satisfy the Luo--Tseng error bound property. The later work \cite{kanzow2021globalized} extends these ideas to settings with nonconvex $q$ but convex $h$. These methods typically handle a proximal mapping scaled by the Hessian or its approximation at each iteration, which leads to nonstandard proximal operators. In practice, an inner iterative solver is needed to evaluate the scaled proximal mapping. 

Recent work \cite{zhou2024proximal} avoids the scaled proximal mapping and instead selects a hybrid Newton direction via a tailored majorization principle. This approach requires $q$ to be an exact convex quadratic and the exact inverse of the Hessian. Another class of proximal quasi-Newton methods also directly works with the standard proximal operator \cite{stella2017forward,stella2017simple,themelis2018forward}. These methods operate on the forward–backward envelope (FBE), an exact-penalty reformulation of the composite objective that shares the same set of local minimizers as \eqref{eq:objective-1}. The initial work in \cite{stella2017forward} exploit the continuous differentiability of the FBE for convex $f$ and develop a line-search scheme that minimizes the FBE along limited-memory BFGS (L-BFGS) quasi-Newton directions. Subsequent work \cite{themelis2018forward} develops a nonmonotone line-search proximal quasi-Newton method based on the FBE framework that accommodates nonconvex objectives. They prove global convergence to a critical point when the FBE satisfies the Kurdyka-Łojasiewicz (KŁ) property, and fast local convergence when the Dennis--Mor\'e condition and the strong local optimality hold. As with other quasi-Newton methods, FBE-based algorithms update a Hessian approximation at every iteration. Improving this approximation typically requires storing additional pairs of iterate and gradient differences (e.g., L-BFGS pairs).

In this paper, we develop a proximal conjugate gradient method (PCG) for solving the general nonconvex and nonsmooth problem~\eqref{eq:objective-1}. Our main algorithmic innovation is to find a closed approximation to Newton direction of $q$ from CG iterations to construct a majorization surrogate at each iteration. The surrogate captures curvature along the current CG direction and employs a specially designed isotropic weight that guarantees majorization of a local second-order model of $q$ along that direction. To better preserve majorization after the proximal map of $h$ and permit further CG refinement, we scale the direction by the ratio between the Cauchy step length and a stepsize obtained from the largest Ritz value of the CG tridiagonal. PCG achieves fast convergence comparable to proximal (quasi-)Newton methods while incurring lower computational and memory costs, making it suitable for large-scale problems. Although the proposed method includes an inner loop, when the Hessian spectrum is clustered, CG converges rapidly; moreover, the majorization test often truncates the loop early, keeping the inner-loop cost low. Meanwhile, Hessian--vector products are obtained via automatic differentiation, avoiding forming and storing a large matrix explicitly. Our convergence analysis establishes that every accumulation point of the generated sequence is a first-order critical point. Numerical results on well-known applications verify the effectiveness of the proposed algorithm.

\section{Preliminary}
This section briefly reviews PG and CG used in this paper.

\noindent\textit{Proximal Gradient:} $k$-th iteration of PG solves the isotropic surrogate
\begin{align}
\operatorname*{arg\,min}_{\bm{x}}~q(\bm{x}_k)+\bm{g}_k^{\top}(\bm{x}-\bm{x}_k)+\frac{1}{2\tau}\|\bm{x}-\bm{x}_k\|^2+h(\bm{x}),\label{eq:isotropic-model}
\end{align}
by the standard proximal operator. If $\tau\in(0,1/L)$, PG produces a monotonically nonincreasing objective sequence, and standard arguments show that every cluster point of $\{\bm{x}_k\}$ is a first-order critical point of $f$~\cite{nesterov2013gradient}. PG is often slow on nonconvex and ill-conditioned problems, which has motivated many acceleration schemes; nonetheless, most prove convergence to first-order critical points by appealing to the PG step’s descent property or its optimality conditions. This is because as long as the PG mapping $G_\tau(\bm{x}_k)=\frac{1}{\tau}(\bm{x}_k-\operatorname{prox}_{\tau h}(\bm{x}_k - \tau \bm{g}_k))$ is single-valued then it is the canonical certificate of first-order criticality: $G_\tau\left(\bm{x}^{\star}\right)=0$ iff $0 \in \nabla q\left(\bm{x}^{\star}\right)+\partial h\left(\bm{x}^{\star}\right)$.

\noindent\textit{Conjugate Gradient:} CG is a classical iterative method for solving linear systems.  At an iterate $\bm{x}_k$, denote $\bm{g}_k = \nabla q(\bm{x}_k)$ and $\bm{H}_k = \nabla^2 q(\bm{x}_k)$. To approximately solve $\bm{H}_k \bm{z} = -\bm{g}_k$ for a Newton direction approximation, CG generates $\{\bm{d}_k^{j}\}$ that are $\bm{H}_k$-conjugate ($\bm{d}_k^{j \top}\bm{H}_k\bm{d}_k^{j} = 0$ for $i \neq j$) and directions $\{\bm{z}_k^{j}\}$. Starting from $\bm{z}_k^{0} = 0$, $\bm{r}_k^{0} = \bm{g}_k$, and $\bm{d}_k^{0} = -\bm{g}_k$, the $j$-th loop first checks for negative curvature and terminates if $\bm{d}_k^{j\top} \bm{H}_k \bm{d}_k^j \leq 0$; otherwise the standard CG updates are \cite{nocedal2006numerical}

\begin{align}
\alpha_k^j&=\frac{\bm{r}_k^{j\top} \bm{r}_k^j}{\bm{d}_k^{j\top} \bm{H}_k \bm{d}_k^j}, ~\bm{z}_{k}^{j+1}=\bm{z}_k^j+\alpha_k^j \bm{d}^j_k, ~\bm{r}_{k}^{j+1}=\bm{r}^j_k+\alpha_k^j \bm{H}_k \bm{d}^j_k,\nonumber\\
    \beta_{k}^{j+1}&=\frac{\bm{r}_{k}^{j+1\top} \bm{r}_{k}^{j+1}}{\bm{r}_{k}^{j\top} \bm{r}_{k}^{j}}, ~\bm{d}_{k}^{j+1}=-\bm{r}_{k}^{j+1}+\beta_{k}^{j+1} \bm{d}_k^j. \label{eq:cg-iterates}
\end{align}
\section{Main Results}

In this section, we present how to estimate a step size along the negative gradient direction based on CG coefficients. Next, we introduce a tailored majorization surrogate for generating candidate descent directions and present the overall algorithm. We conclude with a convergence analysis establishing that every accumulation point is first-order critical.

\subsection{Step Size Estimation from CG Coefficients}
The choice of step size along the negative gradient is important for the convergence of proximal algorithms to critical points of \eqref{eq:objective-1}. Many algorithms require a step size $\tau_k \in (0, 1/L)$, where $L$ is the global Lipschitz constant of $\nabla q$ to ensure the sufficient decrease inequality for each iteration \cite{combettes2011proximal,themelis2018forward,liu2024inexact,baraldi2023proximal,zhou2024proximal}

\begin{equation}
    f(\bm{x}_{+}) \leq f(\bm{x}_k) - c_k\|\bm{x}_{+}-\bm{x}_k\|^2,
    \label{eq:sufficient-descent}
\end{equation}
for some $c_k\ge 0$. However, obtaining $L$ entails computing the exact largest eigenvalue $\lambda_{\max}$ of the Hessian, which is costly in large-scale settings. In PCG, we further exploit the CG sequence \eqref{eq:cg-iterates} to estimate $\lambda_{\max}$, thereby \emph{jointly} determining the search direction and the step size. In particular, after $j$ inner CG steps on $\bm{H}_k$, define the $j \times j$ tridiagonal

\begin{equation}
\bm{T}_{k, j}=\left[\begin{array}{cccc}
\delta_{k, 1} & \gamma_{k, 1} & & \\
\gamma_{k, 1} & \delta_{k, 2} & \ddots & \\
& \ddots & \ddots & \gamma_{k, j-1} \\
& & \gamma_{k, j-1} & \delta_{k, j}
\end{array}\right],
\vspace{-0.0em}
\label{eq:tri}
\end{equation}
from the CG coefficients $\left\{\alpha_k^0, \ldots, \alpha_k^{j-1}\right\}$ and $\left\{\beta_k^1, \ldots, \beta_k^{j-1}\right\}$ (set $\beta_k^0:=0$ ) via

\begin{equation}
\begin{aligned}
\delta_{k,1} &:= \frac{1}{\alpha_k^{0}}, ~\delta_{k,\ell} := \frac{1}{\alpha_k^{\ell-1}}+\frac{\beta_k^{\ell-1}}{\alpha_k^{\ell-2}}, ~ \ell=2,\ldots,j,\\
\gamma_{k,\ell} &:= \frac{\sqrt{\beta_k^{\ell}}}{\alpha_k^{\ell-1}},~ \ell=1,\ldots,j-1.
\end{aligned}
\label{eq:lanczos-1}
\end{equation}
Computing the eigenvalues of $\bm{T}_{k, j}$ obtains the Ritz values for $\bm{H}_k$ on the Krylov subspace generated by the CG run, which follows
\begin{equation}
\lambda_{\min }\left(\bm{H}_k\right) \leq \theta_{k, 1}^{(j)} \leq \cdots \leq \theta_{k, j}^{(j)} \leq \lambda_{\max }\left(\bm{H}_k\right) .
\label{eq:lanczos-2}
\end{equation}
At iteration $k$, after each CG step, we compute $\theta_{k,j}^{(j)}$ and set the step size $\tau_k = \delta/|\theta_{k,j}^{(j)}|$ for some $\delta \in (0,1]$. $\delta$ is introduced for analytical convenience. In practice, set $\delta \approx 1$ for strong performance. We repeat this estimation until the sufficient-decrease condition \eqref{eq:sufficient-descent} is satisfied, where

\begin{equation}
    \bm{x}_{+} = \operatorname{prox}_{\tau_k h}\big(\bm{x}_k - \tau_k \bm{g}_k\big).
    \label{eq:suff-de}
\end{equation}
This CG-driven step size search is first applied in proximal algorithms. Unlike geometric backtracking, which repeatedly shrinks from an initial guess, our procedure keeps $\tau_k$ within a data-driven bounded range implied by local spectral estimates, avoiding sensitivity to poor initial choices. Because it leverages local rather than global Lipschitz information, it can let $\tau_k > 1/L$, which accelerates convergence in practice. We demonstrate that this step-size search is well-defined by proving finite termination.

\begin{lemma}
    Fix the iterate $k$ of PCG. Assume no negative curvature is encountered by CG. After $j$ CG steps, form the Lanczos tridiagonal $\bm{T}_{k, j}$ from \eqref{eq:lanczos-1}; let $\theta_{k, j}^{(j)}=\lambda_{\max }\left(\bm{T}_{k, j}\right)$. Set $\tau_k^{(j)} = \delta/|\theta_{k,j}^{(j)}|$ and $\bm{x}_{+}^{(j)}=\operatorname{prox}_{\tau_k^{(j)} h}\left(\bm{x}_k-\tau_k^{(j)} \bm{g}_k\right)$. Then there exists a finite $j$ and a $\delta \in (0,1]$ such that the sufficient decrease condition \eqref{eq:sufficient-descent} holds with $c_k^{(j)}=\frac{1}{2}\left(\frac{|\theta_{k, j}^{(j)}|}{\delta}-L_k\right) \geq 0$, where $L_k = \sup _{t \in[0,1]} \lambda_{\max }\left(\nabla^2 q\left(\bm{x}_k+t \bm{s}\right)\right)$ and $\bm{s} = \bm{x}^{(j)}_+ - \bm{x}_k$.
    \label{lemma:1}
\end{lemma}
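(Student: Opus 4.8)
The plan is to derive \eqref{eq:sufficient-descent} directly from the optimality of the proximal step in \eqref{eq:suff-de}, combined with a Taylor expansion of $q$ whose remainder is controlled by the local curvature bound $L_k$, and then to exhibit a pair $(j,\delta)$ that renders the resulting coefficient nonnegative. First I would record the variational inequality implied by the prox: since $\bm{x}_+^{(j)}$ minimizes $h(\bm{x})+\frac{1}{2\tau_k^{(j)}}\|\bm{x}-(\bm{x}_k-\tau_k^{(j)}\bm{g}_k)\|^2$, comparing its value with the value at $\bm{x}=\bm{x}_k$ and expanding the square (the $\frac{\tau_k^{(j)}}{2}\|\bm{g}_k\|^2$ terms cancel) yields, with $\bm{s}:=\bm{x}_+^{(j)}-\bm{x}_k$,
\[
\bm{g}_k^{\top}\bm{s}+\tfrac{1}{2\tau_k^{(j)}}\|\bm{s}\|^2+h(\bm{x}_+^{(j)})\le h(\bm{x}_k).
\]
For the smooth part, Taylor with integral remainder gives $q(\bm{x}_+^{(j)})=q(\bm{x}_k)+\bm{g}_k^{\top}\bm{s}+\int_0^1(1-t)\,\bm{s}^{\top}\nabla^2 q(\bm{x}_k+t\bm{s})\,\bm{s}\,dt$, and bounding the integrand using the definition of $L_k$ yields the descent-lemma estimate $q(\bm{x}_+^{(j)})\le q(\bm{x}_k)+\bm{g}_k^{\top}\bm{s}+\frac{L_k}{2}\|\bm{s}\|^2$. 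Adding $h(\bm{x}_+^{(j)})$, eliminating $\bm{g}_k^{\top}\bm{s}+h(\bm{x}_+^{(j)})$ via the prox inequality, and substituting $1/\tau_k^{(j)}=|\theta_{k,j}^{(j)}|/\delta$ collapses everything to
\[
f(\bm{x}_+^{(j)})\le f(\bm{x}_k)-\tfrac{1}{2}\left(\tfrac{|\theta_{k,j}^{(j)}|}{\delta}-L_k\right)\|\bm{s}\|^2,
\]
which is precisely \eqref{eq:sufficient-descent} with the claimed $c_k^{(j)}$.

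Next I would establish positivity of the Ritz value. Because no negative curvature is encountered, the conjugate directions $\{\bm{d}_k^{i}\}$ form an $\bm{H}_k$-orthogonal basis of the Krylov subspace with $\bm{d}_k^{i\top}\bm{H}_k\bm{d}_k^{i}>0$, so for any nonzero $\bm{v}=\sum_i c_i\bm{d}_k^{i}$ in that subspace, $\bm{v}^{\top}\bm{H}_k\bm{v}=\sum_i c_i^2\,\bm{d}_k^{i\top}\bm{H}_k\bm{d}_k^{i}>0$; that is, $\bm{H}_k$ is positive definite on the subspace. Since $\bm{T}_{k,j}$ represents $\bm{H}_k$ in an orthonormal (Lanczos) basis of the same subspace, $\bm{T}_{k,j}\succ 0$, hence $\theta_{k,j}^{(j)}=\lambda_{\max}(\bm{T}_{k,j})>0$ and $|\theta_{k,j}^{(j)}|=\theta_{k,j}^{(j)}$.

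Finally, for nonnegativity of $c_k^{(j)}$ and finite termination, I would invoke the global bound $\lambda_{\max}(\nabla^2 q(\cdot))\le L$, which gives $L_k\le L$ uniformly, together with the interlacing \eqref{eq:lanczos-2}, which gives $\theta_{k,j}^{(j)}\le\lambda_{\max}(\bm{H}_k)\le L$. Choosing any $\delta\in\left(0,\theta_{k,j}^{(j)}/L\right]$, a nonempty subinterval of $(0,1]$, forces $|\theta_{k,j}^{(j)}|/\delta\ge L\ge L_k$ and thus $c_k^{(j)}\ge 0$; since this already succeeds at, say, $j=1$ (where $\theta_{k,1}^{(1)}=\bm{g}_k^{\top}\bm{H}_k\bm{g}_k/\|\bm{g}_k\|^2>0$), the search terminates in finitely many inner steps.

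The hard part will be handling the coupling in $L_k$: the segment over which $L_k$ is taken depends on $\bm{s}$, which depends on $\tau_k^{(j)}$ and hence on both $\delta$ and $\theta_{k,j}^{(j)}$. One cannot in general close the inequality with $\delta=1$ and large $j$ alone, because $L_k$ may exceed $\lambda_{\max}(\bm{H}_k)$ (curvature can grow along the step) while the Ritz value only approaches $\lambda_{\max}(\bm{H}_k)$ from below. The clean resolution is to decouple via the uniform estimate $L_k\le L$ and use $\delta$ as the free knob, as above; alternatively, a continuity argument gives the same conclusion with a possibly tighter constant, since as $\delta\to 0^{+}$ one has $\tau_k^{(j)}\to 0$, hence $\bm{s}\to 0$ and $L_k\to\lambda_{\max}(\bm{H}_k)$ by continuity of $\nabla^2 q$, while $|\theta_{k,j}^{(j)}|/\delta\to\infty$.
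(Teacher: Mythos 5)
Your proof is correct, and its first half---deriving \eqref{eq:sufficient-descent} from the prox variational inequality plus the integral-remainder Taylor bound with $L_k$---is exactly the inequality the paper invokes as ``the $L_k$-smoothness bound.'' Where you genuinely diverge is the existence step for $(j,\delta)$. The paper argues via Lanczos theory: $\theta_{k,j}^{(j)}$ increases monotonically toward $\lambda_{\max}(\bm{H}_k)$, and then it appeals to $q\in C^2$ to assert that some finite $j$ and $\delta\in(0,1]$ make $|\theta_{k,j}^{(j)}|/\delta-L_k>0$; implicitly this is the continuity argument you mention at the end ($\delta\to 0^+$ forces $\bm{s}\to 0$, hence $L_k\to\lambda_{\max}(\bm{H}_k)$, while $|\theta_{k,j}^{(j)}|/\delta\to\infty$), and the paper leaves the circular dependence of $L_k$ on $(j,\delta)$---which you correctly flag---unaddressed. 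Your resolution via the uniform bound $L_k\le L$ together with $\delta\le\theta_{k,j}^{(j)}/L$ is cleaner and fully rigorous, works already at $j=1$, and additionally supplies the positivity of $\theta_{k,j}^{(j)}$ (positive definiteness of $\bm{T}_{k,j}$ on the Krylov subspace), which the paper takes for granted. The trade-offs: your choice of $\delta$ invokes the global constant $L$ that the method is designed to avoid, and with $\delta=\theta_{k,j}^{(j)}/L$ the resulting step size is $\tau_k^{(j)}=1/L$, i.e., the classical conservative step; the paper's Ritz-convergence route, though gappy as sketched, is what explains the algorithm's practical point that running more CG steps lets $\tau_k$ approach $1/\lambda_{\max}(\bm{H}_k)\ge 1/L$. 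Combining your uniform-bound fallback with your closing continuity argument (which recovers the tighter constant rigorously) would strictly improve on the paper's sketch.
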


\begin{proof}
     We provide the proof sketch due to space limitation. CG on $\bm{H}_k$ is mathematically equivalent to the symmetric Lanczos process starting with $\bm{r}_k/\|\bm{r}_k\|$. Hence $\theta_{k, j}^{(j)}$ increase monotonically and converges to $\lambda_{\max}(\bm{H}_k)$ when $j=n$ \cite{liesen2013krylov}. The $L_k$-smoothness bound for $q$ at $\bm{x}_k$ gives
     
     \begin{equation}
    \begin{aligned}
    f\left(\bm{x}_{+}^{(j)}\right) &\leq f\left(\bm{x}_k\right)-\left(\frac{1}{2 \tau_k^{(j)}}-\frac{L_k}{2}\right)\left\|\bm{x}_{+}^{(j)}-\bm{x}_k\right\|^2\\
    &=f\left(\bm{x}_k\right)-\frac{1}{2}\left(\frac{|\theta_{k, j}^{(j)}|}{\delta}-L_k\right)\left\|\bm{x}_{+}^{(j)}-\bm{x}_k\right\|^2.
    \end{aligned}
    \end{equation}
    Given $\theta_{k, j}^{(j)} \nearrow \lambda_{\max}(\bm{H}_k) \leq  L_k$ and $q \in C^2$, we can have $\frac{|\theta_{k, j}^{(j)}|}{\delta}-L_k>0$ for some finite $j$ and a $\delta \in (0,1]$, whence the claim.
\end{proof}
\noindent CG/Lanczos stepsize estimation is especially effective in several scenarios. When the Hessian spectrum is clustered, the largest Ritz value rapidly approaches $\lambda_{\max}$. For highly indefinite Hessians with $\left|\lambda_{\min }\right| \gg \lambda_{\max }>0$, only a few iterations are needed to obtain $|\theta^{(j)}| \geq \lambda_{\max}$.
\subsection{Descent Direction Selection via Majorization}

The CG directions $\{\bm{z}_k^{j}\}$ capture the local curvature of $q$ at $\bm{x}_k$ and provide promising search directions. Since the composite objective $f$ also involves the (possibly nonsmooth) term $h$, we vet these directions using a majorization surrogate of $f$ and retain those that certify decrease. Recall that $k$-th iteration of proximal Newton-type algorithms typically solves \cite{kanzow2021globalized,liu2024inexact,lee2014proximal}
\begin{equation}
    \arg \min_{\bm{x}} ~\underbrace{q(\bm{x}_k)+\bm{g}_k^{\top}(\bm{x}-\bm{x}_k) + \frac{1}{2} \|\bm{x}-\bm{x}_k\|^2_{\bm{B}_k}}_{m(\bm{x},\bm{x}_k)} + h(\bm{x}),
    \label{eq:local-model-1}
\end{equation}
where $\bm{B}_k$ is either $\bm{H}_k$ or a suitable approximation. Directly handling \eqref{eq:local-model-1} is challenging because the quadratic term $\tfrac{1}{2}\|\bm{x}-\bm{x}_k\|_{\bm{B}_k}^2$ in $m(\bm{x},\bm{x}_k)$ is anisotropic. PG discards curvature by setting $\bm{B}_k=\tfrac{1}{\tau}\bm{I}$ and instead solves the isotropic surrogate \eqref{eq:isotropic-model} but can lead to slow convergence. Drawing on both approaches, we construct a curvature-aware isotropic surrogate and select CG directions along which the surrogate majorizes $q(\bm{x})$, thereby ensuring descent for $f$. In particular, given any $\bm{z}_k\in\{\bm{z}_k^{j}\}$, define 
\begin{equation}
    \Tilde{m}_{\Tilde{\tau}_k}(\bm{x},\bm{x}_k):=q(\bm{x}_k)- \frac{\tau_k}{\tau_k^c}\bm{z}_k^{\top}(\bm{x}-\bm{x}_k)+\frac{1}{2\Tilde{\tau}_k}\|\bm{x}-\bm{x}_k\|^2,
    \label{eq:isotropic-model-2}
\end{equation}
where $\tau_k$ is estimated based on \eqref{eq:suff-de} and $\tau_{k}^c:=\frac{\langle \bm{g}_k,\bm{g}_k\rangle}{\langle \bm{g}_k,\bm{H}_k\bm{g}_k\rangle}$ denotes the Cauchy step length. Two design choices make (2) effective and special. \textbf{First}, we scale the step along $\bm{z}_k$ by $\tau_k/\tau_k^{c}$; by Lemma~\ref{lemma:1}, this ratio is at most $1$. The intuition is that $\tau_k^{c}$ minimize the local model $m(\bm{x},\bm{x}_k)$ along $\bm{g}_k$, while $\tau_k$ enforces sufficient descent for the full objective $f$. Scaling $\bm{z}_k$ by the ratio $\tau_k/\tau_k^{c}$ therefore increases the likelihood that $\bm{z}_k$ serves as a descent direction for $f$. \textbf{Second}, we set the proximal radius $\Tilde{\tau}_k$ according to Lemma~\ref{lemma:proximal-radius}, so that $\Tilde{m}_{\Tilde{\tau}_k}(\bm{x},\bm{x}_k)$ is guaranteed to majorize the quadratic model of $q$ when restricted to the scaled direction.
\begin{lemma}
\label{lemma:proximal-radius}
    Let $\bm{z}_k \neq 0$, and define $A:=\|\bm{z}_k\|^2,~ B:=\bm{z}_k^{\top} \bm{H}_k \bm{z}_k, ~ C:=\langle \bm{g}_k, \bm{z}_k\rangle$. Consider the line $\bm{x} = \bm{x}_k + \alpha \bm{z}_k$ with $\alpha \in \mathbb{R}$. For $\Tilde{\tau}_k>0$, set $a:=\frac{A}{\Tilde{\tau}_k}-B, ~ b:=-(A+C)$. Then, $\forall \alpha \in [0,1]$, the majorization $\tilde{m}_{\Tilde{\tau}_k}\left(\bm{x}_k + \alpha \bm{z}_k,\bm{x}_k\right) \geq m(\bm{x}_k + \alpha \bm{z}_k,\bm{x}_k)$ holds if either $\Tilde{\tau}_k \leq \frac{A}{B}$ and $A+C\leq 0$, or $\frac{A}{A+B+C}\leq \Tilde{\tau}_k \leq \frac{A}{B}$ and $A+C\geq0$.
\end{lemma}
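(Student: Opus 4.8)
The plan is to collapse the vector inequality to a one–dimensional statement by restricting everything to the line $\bm{x}=\bm{x}_k+\alpha\bm{z}_k$. Substituting $\bm{x}-\bm{x}_k=\alpha\bm{z}_k$ into the quadratic model $m(\cdot,\bm{x}_k)$ in \eqref{eq:local-model-1} (with $\bm{B}_k=\bm{H}_k$) and into the curvature-aware surrogate $\tilde{m}_{\tilde\tau_k}(\cdot,\bm{x}_k)$ in \eqref{eq:isotropic-model-2}, and using the shorthands $A=\|\bm{z}_k\|^2$, $B=\bm{z}_k^\top\bm{H}_k\bm{z}_k$, $C=\langle\bm{g}_k,\bm{z}_k\rangle$, both objects become scalar quadratics in $\alpha$. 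I would then form the gap $\Delta(\alpha):=\tilde{m}_{\tilde\tau_k}(\bm{x}_k+\alpha\bm{z}_k,\bm{x}_k)-m(\bm{x}_k+\alpha\bm{z}_k,\bm{x}_k)$ and collect terms into $\tfrac12 a\alpha^2+b\alpha$, where $a=A/\tilde\tau_k-B$ is the net curvature and $b=-(A+C)$ the net slope at the base point. Passing from the scaled linear coefficient $-(\tau_k/\tau_k^c)A$ to the cleaner $b$ uses the Cauchy-ratio bound $\tau_k/\tau_k^c\le1$ from Lemma~\ref{lemma:1}, which is legitimate precisely because $\alpha\ge0$ on the segment of interest. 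The structural fact that drives the whole argument is $\Delta(0)=0$: surrogate and model agree at $\bm{x}_k$, so the claim is entirely about the sign of a quadratic pinned to the origin.

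The proof then reduces to a sign analysis of $\phi(\alpha):=\tfrac12 a\alpha^2+b\alpha=\alpha\bigl(\tfrac12 a\alpha+b\bigr)$ on $[0,1]$, governed by the signs of $a$ and $b$. Two equivalences supply the bridge to the hypotheses. First, since CG is assumed to encounter no negative curvature, $B=\bm{z}_k^\top\bm{H}_k\bm{z}_k>0$, so the threshold $\tilde\tau_k\le A/B$ is exactly equivalent to $a\ge0$, i.e. the surrogate is at least as curved as the model along $\bm{z}_k$. Second, the lower bound $\tilde\tau_k\ge A/(A+B+C)$ rearranges to $A/\tilde\tau_k\le A+B+C$, hence $a\le A+C=-b$, which is a sign statement on the endpoint value $\phi(1)=\tfrac12 a+b$.

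With these equivalences I would split into the two stated cases. When $A+C\le0$, one has $b\ge0$ together with $a\ge0$, so both coefficients of $\phi$ carry the required sign and the conclusion is immediate for every $\alpha\ge0$, in particular on $[0,1]$; no use of the lower threshold on $\tilde\tau_k$ is needed here. When $A+C\ge0$, the linear term no longer cooperates and a termwise bound fails; instead I would exploit that $\phi$ is a quadratic vanishing at $\alpha=0$, fix its convexity from the sign of $a$, and combine the base-point value $\phi(0)=0$ with the endpoint value $\phi(1)$—now controlled by $\tilde\tau_k\ge A/(A+B+C)$—to pin the sign of $\phi$ across all of $[0,1]$ via the fact that a convex (resp.\ concave) function lies on one fixed side of the chord through its endpoints. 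I would also record that the two-sided window $[A/(A+B+C),\,A/B]$ is nonempty, which follows from $A+C\ge0\Rightarrow A+B+C\ge B$.

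I expect the main obstacle to be exactly the case $A+C\ge0$. There the slope at the base point points the unfavorable way, so the inequality cannot be read off from coefficient signs; it survives only because $\tilde\tau_k$ is squeezed into the window above, and the argument must combine the endpoint value with the curvature sign rather than bound the quadratic termwise. The other delicate point is bookkeeping on the scaling factor $\tau_k/\tau_k^c\le1$: one must check it is used in the direction that makes the simplified coefficient $b$ yield a \emph{valid} one-sided bound on the true gap $\Delta(\alpha)$ for every $\alpha\in[0,1]$, rather than the reverse.
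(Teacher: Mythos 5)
Your reduction to the scalar quadratic $\phi(\alpha)=\tfrac{1}{2}a\alpha^2+b\alpha$ with $\phi(0)=0$, and your treatment of the first case ($A+C\le 0$, giving $a\ge 0$, $b\ge 0$, hence $\phi\ge 0$ on all of $[0,\infty)$), coincide with the paper's proof and are correct. Your bookkeeping on the scaling is also sound and in fact more careful than the paper: since $\tau_k/\tau_k^c\le 1$, $A\ge 0$ and $\alpha\ge 0$, replacing the true linear coefficient $-\bigl(\tfrac{\tau_k}{\tau_k^c}A+C\bigr)$ by $b=-(A+C)$ only decreases the gap, so $\phi\ge 0$ does imply majorization of the true surrogate; the paper silently sets the ratio to $1$.

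The genuine gap is your second case, and it cannot be repaired. The chord fact you invoke points the wrong way: a convex function lies \emph{below} the chord through its endpoints, so convexity ($a\ge 0$, forced by $\tilde{\tau}_k\le A/B$) together with $\phi(0)=0$ yields only the upper bound $\phi(\alpha)\le\alpha\,\phi(1)$, never the lower bound $\phi\ge 0$; it is concavity that places a function above its chords. Moreover, the lower threshold $\tilde{\tau}_k\ge A/(A+B+C)$ translates to $a\le -b$, which forces $\phi(1)=b+\tfrac{1}{2}a\le\tfrac{1}{2}b\le 0$: the endpoint value is nonpositive, not nonnegative as your sketch assumes. The failure is structural rather than a fixable slip: when $A+C>0$ we have $\phi'(0)=b<0$ and $\phi(0)=0$, so $\phi<0$ on some interval $(0,\varepsilon)$ regardless of $a$ (i.e.\ regardless of $\tilde{\tau}_k$), hence majorization on $[0,1]$ is impossible and the second case of Lemma~\ref{lemma:proximal-radius} is false as stated (except in the degenerate case $A+C=0$). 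You should know the paper's own proof of this case is equally broken: it demands both $\alpha^*=-b/a\ge 1$ (i.e.\ $a\le -b$) and $d(1)=b+\tfrac{1}{2}a\ge 0$ (i.e.\ $a\ge -2b$), which are mutually exclusive when $b<0$, and then substitutes the non-equivalent window appearing in the lemma statement. The correct characterization of $\phi\ge 0$ on $[0,1]$ is $b\ge 0$ together with $a+2b\ge 0$, i.e.\ $A+C\le 0$ and $A/\tilde{\tau}_k\ge B+2(A+C)$; so your instinct that $A+C\ge 0$ is ``the main obstacle'' is right for a deeper reason than you suggest---no argument, yours or the paper's, can close that case.
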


\begin{proof}
    Along $\bm{x} = \bm{x}_k + \alpha \bm{z}_k$, the model difference is $d(\alpha)=\tilde{m}_{\Tilde{\tau}_k}\left(\bm{x}_k + \alpha \bm{z}_k,\bm{x}_k\right) - m(\bm{x}_k + \alpha \bm{z}_k,\bm{x}_k) = -(A+C) \alpha+\frac{1}{2}\left(\frac{A}{\Tilde{\tau}_k}-B\right) \alpha^2 = b \alpha+\frac{1}{2} a \alpha^2$. We discuss different cases that make $d(\alpha) \geq 0$ for all $\alpha \in [0,1]$. If $a \geq 0$ and $b \geq 0$, then $d^{\prime}(\alpha) \geq d^{\prime}(0)=b \geq 0$, so $d$ is nondecreasing on $[0,1]$ and $\min d= d(0)=0$. This yields $\Tilde{\tau}_k \leq \frac{A}{B}$ and $A+C\leq 0$. Let $a \geq 0$ and $b<0$. The stationary point is at $\alpha^*=-\frac{b}{a}>0$. If $0<\alpha^*<1$, then $\min _{[0,1]} d=d\left(\alpha^*\right)=-\frac{b^2}{2a}<0$ which is impossible. Thus we must have $\alpha^* \geq 1$, i.e. $a \leq-b$. Over $[0,1], d$ then decreases, so the minimum is $d(1)=b+\frac{1}{2} a$, which $\geq 0$, i.e. $a \geq-2 b$. Combine everything together we have $\frac{B+2(A+C)}{A} \leq \frac{1}{\Tilde{\tau}_k} \leq \frac{B+(A+C)}{A}$. To retain the same upper bound as the first case, we take $\frac{A}{A+B+C}\le \Tilde{\tau}_k\le \frac{A}{B}$ when $A+C\ge 0$. Note that the case $a \leq 0$ is excluded to make the majorization argument meaningful since we can always find a small $\Tilde{\tau}_k$ to avoid this case.
\end{proof}
\begin{figure}[t]
  \centering
  \includegraphics[width=0.5\linewidth]{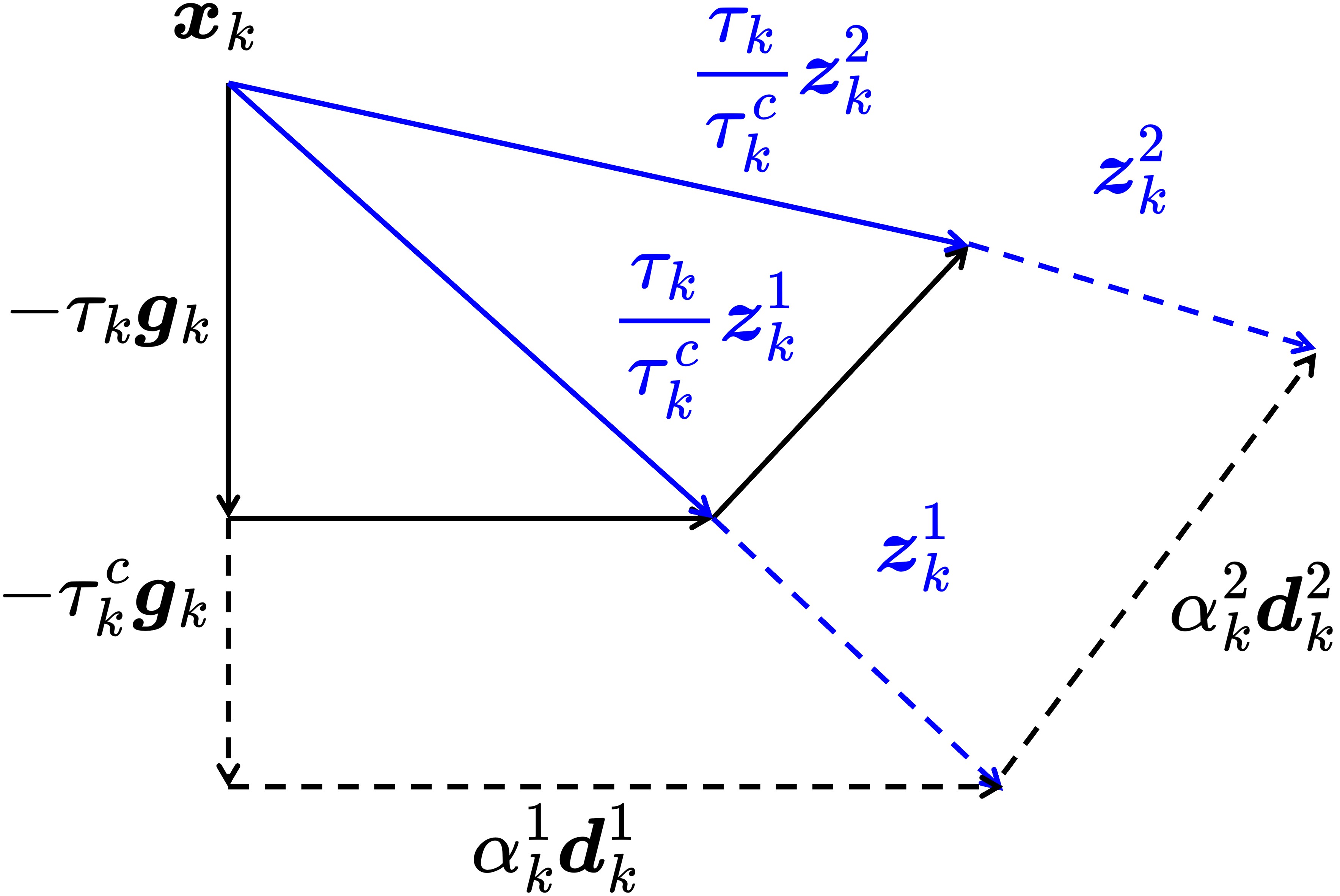} 
  \captionsetup{skip=2pt}
  \caption{Geometric illustration of step scaling in PCG.}
  \label{fig:scaled-directions}
\end{figure}

\noindent Given a $\Tilde{\tau}_k$, we find
\begin{equation}
    \Tilde{\bm{x}} \in \operatorname*{arg\,min}_{\bm{x}}~\Tilde{m}_{\xi\Tilde{\tau}_k}(\bm{x},\bm{x}_k)+h(\bm{x}),
    \label{eq:isotropic-model-3}
\end{equation}
for certain $\xi \in (0,1)$ by the standard proximal operator and certify $\bm{z}_k$ as a descent direction if $q(\Tilde{\bm{x}}) \leq \Tilde{m}_{\Tilde{\tau}_k}(\Tilde{\bm{x}},\bm{x}_k)$. We apply the majorization test along the CG sequence, accepting the last direction that passes and recording the first that fails; the latter typically offers a closer Krylov-subspace approximation to the Newton step. These two directions then serve as endpoints for the subsequent segment backtracking. We summarize the whole algorithm in the following:

\begin{algorithm}
\caption{Proximal Conjugate Gradient Method (PCG)}

  \label{alg:pn-cg-sb-doa}
    \begin{algorithmic}[1]
    \Require Starting point $\bm{x}_k$, and set $\delta \in (0,1]$, $\xi \in (0,1)$.
    \For{$k = 0,1,2,\dots$}
    \State\label{line-2} \hspace{-1em}Update CG iterates based on \eqref{eq:cg-iterates} with negative curvature check \StateCont{\hspace{-1em}and compute $\tau_k$ via \eqref{eq:tri}-\eqref{eq:suff-de} until condition \eqref{eq:sufficient-descent} is met.}
    \State \label{line-3}\hspace{-1em}Identify candidate directions based the majorization principle \StateCont{\hspace{-1em}\eqref{eq:isotropic-model-3}; return $\bm{z}_k^{\text{acc}},\bm{z}_k^{\text{rej}}$, corresponding proximal radius $\Tilde{\tau}_k^{\text{acc}},\Tilde{\tau}_k^{\text{rej}}$,} \StateCont{\hspace{-1em}and $f(\tilde{\bm{x}}_k^{\text{acc}})$.}
    \State\hspace{-1em}Compute $\mu_k^\star := \max\{\mu\in[0,1] \mid 
    f\big(\operatorname{prox}_{\tilde{\tau}_k(\mu) h}\big(\bm{x}_k + \tilde{\bm{z}}_k(\mu)\big)\big) \le f(\tilde{\bm{x}}_k^{\mathrm{acc}})\}$, where $\tilde{\tau}_k(\mu) := \mu \Tilde{\tau}_k^{\text{rej}}+ (1-\mu)\Tilde{\tau}_k^{\text{acc}}$ and $\tilde{\bm{z}}_k(\mu):=\mu \bm{z}_k^{\text{rej}}+ (1-\mu)\bm{z}_k^{\text{acc}}$.
    \State\label{line-4} \hspace{-1em}Update $\bm{x}_{k+1} = \operatorname{prox}_{\tilde{\tau}_k(\mu_k^\star) h}\big(\bm{x}_k + \tilde{\bm{z}}_k(\mu_k^\star)\big)$
    \EndFor
    \end{algorithmic}
\end{algorithm}

Line~\ref{line-3} tests the latest CG direction(s) from Line~\ref{line-2} and continues the CG update if the majorization condition holds. Line~\ref{line-3} is well defined and always identifies a direction that satisfies the majorization condition. In the worst case, setting $\bm{z}_k=-\tau_k^{\mathrm c}\bm{g}_k$ and choosing $\tilde{\tau}_k\le 1$ reduces \eqref{eq:isotropic-model-2} to \eqref{eq:isotropic-model}. Since $\tau_k$ is selected to satisfy \eqref{eq:sufficient-descent}, this case guarantees majorization. PCG is \emph{Hessian-free}: Hessian–vector products are obtained via automatic differentiation without forming $\nabla^2 q$. For any vector $\bm{w}$, $(\nabla q)^{\prime} \bm{w}=\left.\frac{\partial}{\partial \zeta}\left(\left.\nabla q\right|_{\bm{x}+\zeta \bm{w}}\right)\right|_{\zeta=0}.$
In practice, one first evaluates $\nabla q(\bm{x}+\zeta \bm{w})$ and then differentiates with respect to $\zeta$. See \cite[Ch.~8.4]{bright2025matrix} for details. We also provide the global convergence result for Algorithm \ref{alg:pn-cg-sb-doa}.

\begin{theorem}
    Let $\left\{\bm{x}^k\right\}_{k \in \mathbb{N}}$ be a sequence generated by Algorithm \ref{alg:pn-cg-sb-doa}. Then every accumulation point of $\left\{\bm{x}^k\right\}_{k \in \mathbb{N}}$ is a critical point of $f$.
\end{theorem}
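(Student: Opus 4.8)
The plan is to follow the standard descent-based route for composite problems, but to certify criticality through the proximal-gradient (PG) step $\bm x_+^k=\operatorname{prox}_{\tau_k h}(\bm x_k-\tau_k\bm g_k)$ that PCG already forms in Line~\ref{line-2}, rather than through the curvature-aware point $\tilde{\bm x}_k^{\mathrm{acc}}$. The reason is that the linear term of the surrogate $\tilde m_{\tilde\tau_k}$ in \eqref{eq:isotropic-model-2} carries the scaled CG direction instead of $\bm g_k$, so the optimality condition of $\tilde{\bm x}_k^{\mathrm{acc}}$ does not encode $0\in\nabla q+\partial h$; the PG step, by contrast, has exactly the optimality condition that matches the canonical certificate $G_\tau$ from the Preliminary. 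The argument then splits into three parts: monotone sufficient decrease, summability forcing the PG residual to vanish, and a limiting-subdifferential passage to the limit.

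First I would note that coercivity and lower semicontinuity make the sublevel set $\{\bm x:f(\bm x)\le f(\bm x_0)\}$ compact, so $\inf f>-\infty$ and the monotone sequence $\{\bm x_k\}$ is bounded and admits accumulation points. For the decrease, Lemma~\ref{lemma:1} gives $f(\bm x_+^k)\le f(\bm x_k)-c_k\|\bm x_+^k-\bm x_k\|^2$ with $c_k\ge 0$; moreover the PG direction $\bm z_k=-\tau_k^c\bm g_k$ is always an admissible candidate, since with $\tilde\tau_k\le 1$ the surrogate \eqref{eq:isotropic-model-2} collapses to the isotropic model \eqref{eq:isotropic-model} and hence passes the majorization test. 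Because the selection in Lines~\ref{line-3}--\ref{line-4} never accepts a step worse than this fallback, $f(\bm x_{k+1})\le f(\bm x_+^k)$, and telescoping yields $\sum_k c_k\|\bm x_+^k-\bm x_k\|^2\le f(\bm x_0)-\inf f<\infty$.

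To turn summability into $\|\bm x_+^k-\bm x_k\|\to 0$ I would fix a convergent subsequence $\bm x^{k_i}\to\bm x^\star$ and argue that $\liminf_i c_{k_i}>0$, together with $\tau_{k_i}$ confined to a fixed interval $[\underline\tau,\bar\tau]\subset(0,\infty)$. Both follow from compactness of the sublevel set and continuity of $\nabla^2 q$: the local constants $L_k$ and the largest Ritz values $\theta_{k,j}^{(j)}$ stay bounded, and since $\delta\in(0,1)$ is fixed while within the inner loop $\theta_{k,j}^{(j)}\nearrow\lambda_{\max}(\bm H_k)$ and $L_k\to\lambda_{\max}(\bm H_k)$ as the accepted step shrinks, the coefficient $\tfrac12(|\theta_{k,j}^{(j)}|/\delta-L_k)$ is bounded away from zero near $\bm x^\star$. \emph{This is the step I expect to be the main obstacle}: it must control the adaptive, CG-driven step size jointly with possible indefiniteness of $\bm H_k$ (the negative-curvature exits, which Lemma~\ref{lemma:1} excludes) and the degenerate case $\lambda_{\max}(\bm H(\bm x^\star))\le 0$; a clean treatment will likely need a mild safeguard, such as a uniform floor on $c_k$ or a fallback PG step when curvature is nonpositive, to guarantee a uniform decrease constant.

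Finally I would pass to the limit. As $\|\bm x_+^{k_i}-\bm x_{k_i}\|\to 0$, also $\bm x_+^{k_i}\to\bm x^\star$, and the optimality of the proximal map gives $\bm\xi_{k_i}:=-\bm g_{k_i}-\tfrac{1}{\tau_{k_i}}(\bm x_+^{k_i}-\bm x_{k_i})\in\partial h(\bm x_+^{k_i})$ with $\bm\xi_{k_i}\to-\nabla q(\bm x^\star)$ by continuity of $\nabla q$ and $\tau_{k_i}\ge\underline\tau$. After establishing the $h$-attentive convergence $h(\bm x_+^{k_i})\to h(\bm x^\star)$ from lower semicontinuity and the proximal inequality, the outer semicontinuity of the limiting subdifferential yields $-\nabla q(\bm x^\star)\in\partial h(\bm x^\star)$, that is $0\in\nabla q(\bm x^\star)+\partial h(\bm x^\star)$, equivalently $G_\tau(\bm x^\star)=0$. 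Hence every accumulation point of $\{\bm x_k\}$ is first-order critical.
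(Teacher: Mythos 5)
Your overall template (monotone descent, summability, limiting-subdifferential passage) is the standard one, and your final step—prox-optimality of the PG point, $h$-attentive convergence, outer semicontinuity of $\partial h$—is essentially the argument the paper defers to its cited templates. But your proof breaks at the bridging inequality $f(\bm{x}_{k+1})\le f(\bm{x}_+^k)$, which Algorithm~\ref{alg:pn-cg-sb-doa} simply does not enforce. Line~\ref{line-3} accepts the \emph{last} CG direction that passes the majorization test, and the segment search defining $\mu_k^\star$ benchmarks against $f(\tilde{\bm{x}}_k^{\mathrm{acc}})$, not against the point $\bm{x}_+^k$ formed in Line~\ref{line-2}. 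Admissibility of the fallback direction $-\tau_k^c\bm{g}_k$ does not make the accepted direction at least as good as the PG point: a later CG direction can pass the majorization test (which certifies descent relative to $f(\bm{x}_k)$ only) while $f(\tilde{\bm{x}}_k^{\mathrm{acc}})>f(\bm{x}_+^k)$. Without that inequality your telescoping sum $\sum_k c_k\|\bm{x}_+^k-\bm{x}_k\|^2<\infty$ collapses, and with it the entire certification through the PG residual. The paper instead certifies descent through the point the algorithm actually produces: majorization of $q$ along the accepted direction plus prox-optimality of the update give $f(\bm{x}_{k+1})\le f(\bm{x}_k)-\bigl(\tfrac{1}{2\xi\tilde{\tau}_k}-\tfrac{1}{2\tilde{\tau}_k}\bigr)\|\bm{\Delta}_k\|^2$ with $\bm{\Delta}_k=\bm{x}_{k+1}-\bm{x}_k$, so the quantity driven to zero is the actual iterate displacement.

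The second gap is one you flag but do not resolve, and it is avoidable rather than intrinsic: Lemma~\ref{lemma:1} only yields $c_k\ge 0$, and $c_k$ can be arbitrarily small or exactly zero (e.g.\ $\delta=1$ is allowed, with $\theta_{k,j}^{(j)}$ close to $\lambda_{\max}(\bm{H}_k)=L_k$); your heuristic that $L_k\to\lambda_{\max}(\bm{H}_k)$ ``as the accepted step shrinks'' is circular, since the shrinking step is what you are trying to prove. Your suggested safeguards (a uniform floor on $c_k$, a forced PG fallback) would prove the theorem for a \emph{modified} algorithm, not for Algorithm~\ref{alg:pn-cg-sb-doa} as stated. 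The paper sidesteps this entirely: its descent weight $\tfrac{1}{2\xi\tilde{\tau}_k}-\tfrac{1}{2\tilde{\tau}_k}=\tfrac{1-\xi}{2\xi\tilde{\tau}_k}$ is structurally positive, coming from the deliberate $\xi$-shrinkage of the proximal radius in \eqref{eq:isotropic-model-3} relative to the radius used in the majorization test, and is bounded away from zero as soon as $\{\tilde{\tau}_k\}$ is bounded—no Lipschitz-gap estimate and no algorithmic modification needed. This is precisely why the paper routes sufficient decrease through the $\xi$-gap rather than through the constant $c_k$ of Lemma~\ref{lemma:1}, and it is the repair your argument needs.
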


\begin{proof}
    Due to space limitations, we sketch the main arguments. We show that the sequence $\{f(\bm{x}_k)\}_{k\in\mathbb{N}}$ is nonincreasing and convergent and $\lim _{k \rightarrow \infty}\left\|\bm{\Delta}_k\right\|^2 \rightarrow 0$, where $\bm{\Delta}_k = \bm{x}_{k+1}-\bm{x}_k$. With these two arguments, adapting the proof template in \cite{li2015accelerated,li2017convergence,frankel2015splitting} then yields the theorem. By the majorization principle and optimality of proximal operator, we have $ q(\bm{x}_{k}) \geq q(\bm{x}_{k+1}) + \frac{\tau_k}{\tau_k^c}\bm{z}_k^{\top}\bm{\Delta}_k-\frac{1}{2\Tilde{\tau}_k}\left\|\bm{\Delta}_k\right\|^2$ and $h(\bm{x}_k) \geq h(\bm{x}_{k+1} )- \frac{\tau_k}{\tau_k^c}\bm{z}_k^{\top}\bm{\Delta}_k+\frac{1}{2\xi\Tilde{\tau}_k}\left\|\bm{\Delta}_k\right\|^2$. Combining two together gets $f\left(\bm{x}_{k+1}\right) \leq f\left(\bm{x}_k\right)-\left(\frac{1}{2\xi\Tilde{\tau}_k}-\frac{1}{2\Tilde{\tau}_k}\right)\left\|\bm{\Delta}_k\right\|^2$ which implies $f(\bm{x}_{k+1}) \leq f(\bm{x}_k)$. Since $f$ is coercive and lsc thus lower bounded, we can conclude the $\{f(\bm{x}_k)\}_{k\in\mathbb{N}}$ is convergent and the sequence $\{\bm{x}_k\}_{k\in\mathbb{N}}$ is bounded. Denote $\bm{x}^*$ and $f^*$ as the accumulation point of $\{\bm{x}_k\}_{k\in\mathbb{N}}$ and the corresponding function value, respectively. Summing over $k$ gives $\sum_{k=0}^{\infty}\Big(\tfrac{1}{2\xi \tilde\tau_k}-\tfrac{1}{2\tilde\tau_k}\Big)\|\bm{\Delta}_k\|^2\le f(\bm{x}_0)-f^\star<\infty$. Because $\{\tilde\tau_k\}$ is bounded and $\xi\in(0,1)$, the weights are bounded below by a positive constant; hence $\|\bm{\Delta}_k\|^2\to 0$ as $k\to\infty$.
\end{proof}

\section{Numerical Results}
We compare PCG with PG \cite{combettes2011proximal}, its extrapolated variant APG \cite{li2015accelerated}, PANOC \cite{stella2017simple}, and ZeroFPR \cite{themelis2018forward}. These methods are chosen because they avoid explicit matrix inversion, which is prohibitive at our problem scales. All experiments were conducted in Julia on a Windows 11 laptop with an Intel Core i7-11800H CPU and 32\,GB of RAM.

\begin{figure}[htbp]
\begin{minipage}[b]{.485\linewidth}
  \centering
\centerline{\includegraphics[width=4.9cm]{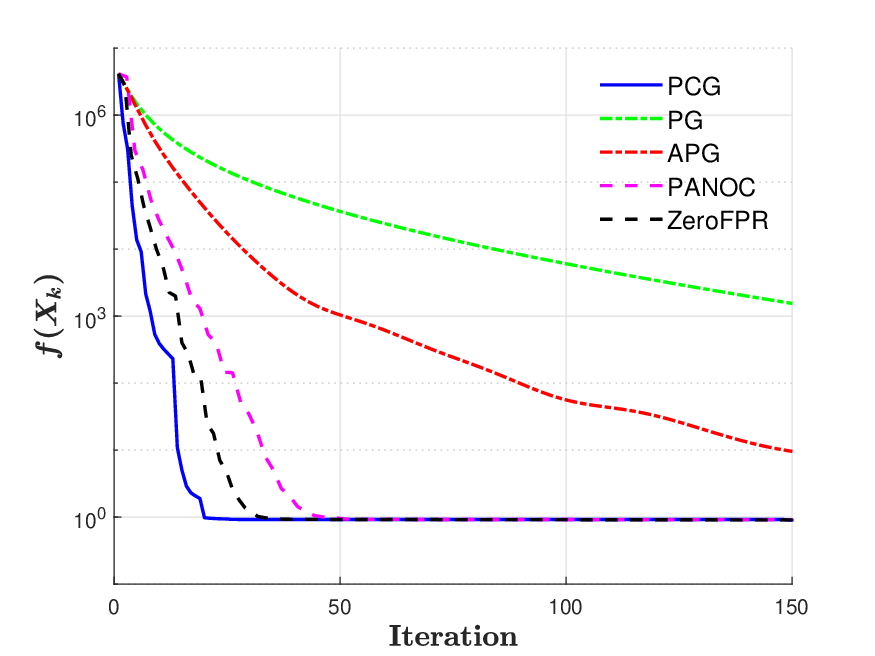}}
\end{minipage}
\hfill
\begin{minipage}[b]{0.485\linewidth}
  \centering
\centerline{\includegraphics[width=4.9cm]{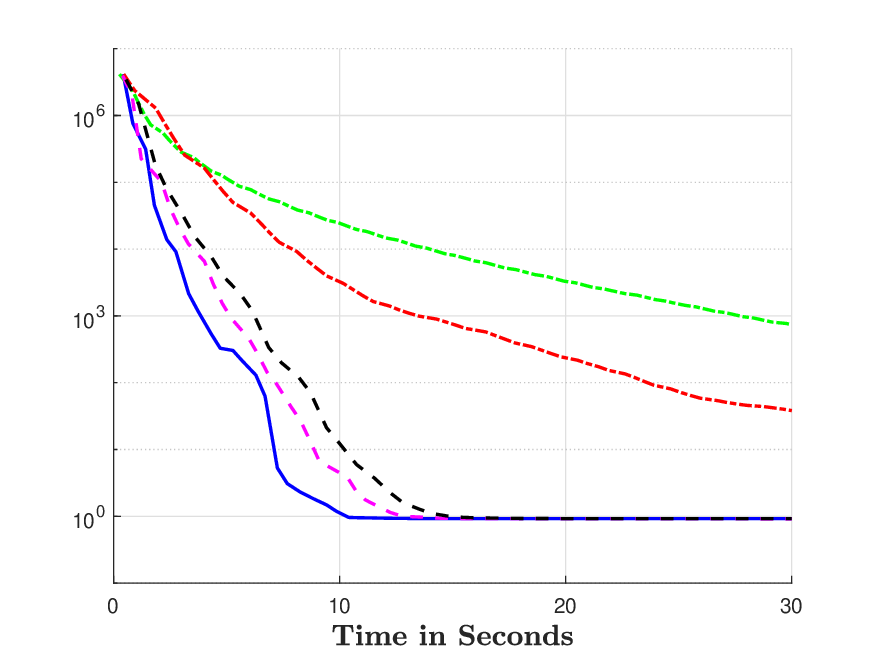}}
\end{minipage}
\begin{minipage}[b]{.485\linewidth}
  \centering
\centerline{\includegraphics[width=4.9cm]{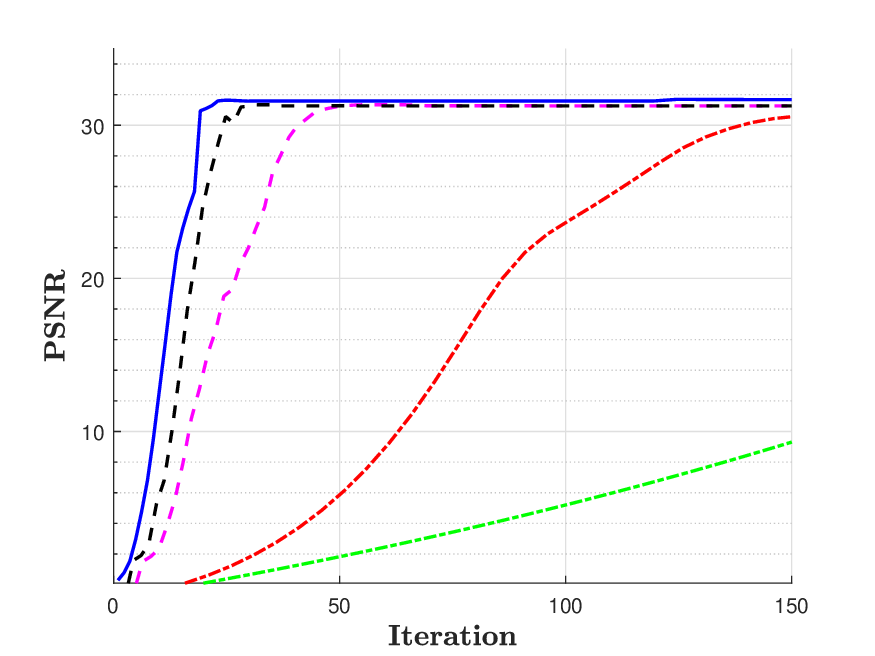}}
\end{minipage}
\hfill
\begin{minipage}[b]{0.485\linewidth}
  \centering
\centerline{\includegraphics[width=4.9cm]{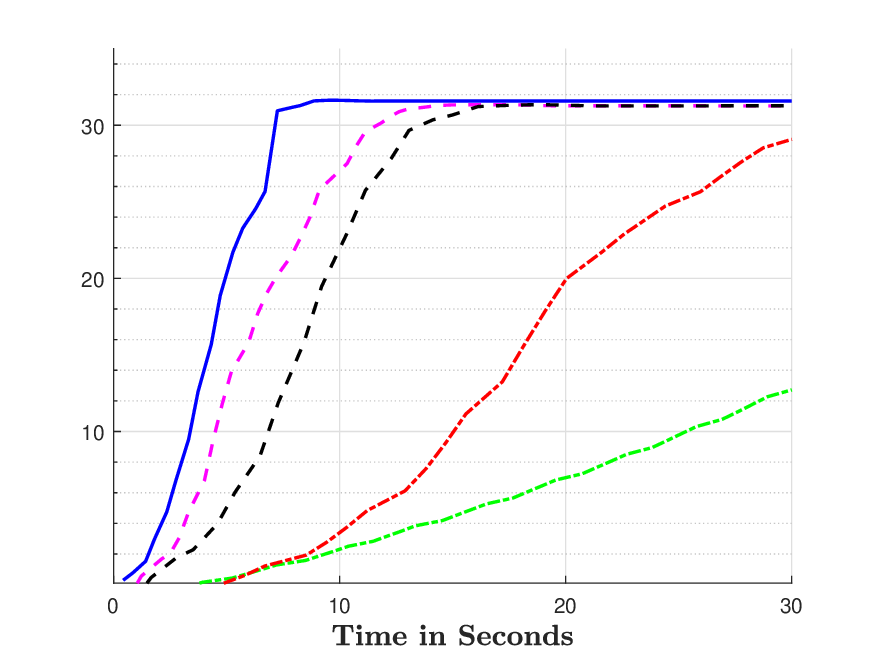}}
\end{minipage}

\caption{ \textit{Top:} Objective value versus iterations and running time. \textit{Bottom:} PSNR comparison, where $\text{PSNR} = 10 \log _{10}\left(\frac{I_{\max }^2}{\mathrm{MSE}}\right) \mathrm{dB}$, $I_{\max }$ denotes the peak pixel value and $\mathrm{MSE}=\frac{\|\bm{X}_{\text{ground-true}}-{\bm{X}_{\text{recover}}}\|_F^2}{512^2}$.}
\label{fig:2}

\end{figure}
\textbf{Compressed sensing MRI:} CS-MRI reconstructs images from undersampled k-space by exploiting wavelet transform sparsity. We use the SCAD penalty $p_{\lambda,a}(\cdot)$ as the sparsity regularizer (definition in \cite{fan2001variable}). The optimization formulation is
\begin{equation}
\min_{\bm{X}} ~\frac{1}{2}\left\|\bm{M} \odot \bm{\mathcal{F}}_{2 D}(\bm{X})-\bm{Y}\right\|_F^2+\sum_i \rho_{\lambda, a}\left(\left|\left[\bm{\mathcal{W}}_{2 D}(\bm{X})\right]_i\right|\right),\nonumber
\end{equation}
where $\bm{M}$ is the sampling mask, $\bm{\mathcal{F}}_{2 D}$ denotes the 2-D FFT, $\bm{Y}$ is the measured k-space data, and $\bm{\mathcal{W}}_{2 D}$ is the 2-D wavelet transform. Since $\bm{\mathcal{W}}_{2 D}^*\bm{\mathcal{W}}_{2 D} = \bm{\mathcal{W}}_{2 D}\bm{\mathcal{W}}_{2 D}^*=\bm{I}$ denotes an orthonormal transform, the proximal operator of $h$ can be computed as $\operatorname{prox}_{\tau h}(x)=\bm{\mathcal{W}}_{2 D}^* \operatorname{prox}_{\tau p_{\lambda,a}(\cdot)}(\bm{\mathcal{W}}_{2 D}(\bm{X}))$. Following \cite{breheny2011coordinate}, for each component $x_i$,

\begin{equation}
\operatorname{prox}_{\tau p_{\lambda, a}}\left(x_i\right)= \begin{cases}\operatorname{sgn}\left(x_i\right) \max \left(0,\left|x_i\right|-\lambda\right), & \left|x_i\right| \leq b_1 \\ \frac{(a-1) x_i-\operatorname{sgn}\left(x_i\right) a \tau \lambda}{a-1-\tau}, & b_1<\left|x_i\right| \leq b_2 \\ x_i, & \left|x_i\right|>b_2,\end{cases}
\nonumber
\end{equation}
with thresholds $b_1 = \frac{\lambda(a-1-\tau+a \tau)}{a-1}$ and $b_2 = a \lambda$. Because the Hessian of $q$ is an orthogonal projector with spectrum $\{1,0\}$, the inner CG loop converges rapidly. Matrix-vector products reduce to (2-D) FFTs, further lowering cost. We evaluate all methods on a $512\times 512$ grayscale brain MRI image~\cite{imagej_mri_stack}. Cartesian $k$-space is undersampled using a variable-density Bernoulli pattern with a fully sampled central disk ($30\%$ of the radius). Outside the center, samples are drawn independently with probability $1/4$. Gaussian noise is added to achieve $\mathrm{SNR}=25\mathrm{dB}$. The regularization parameter was set as $\lambda = 0.002$ and $a =3.7$. The simulation results are
shown in Figure \ref{fig:2}. Our method converges faster than competing methods in both iteration count and computing time. With the same initialization, it also attains a slightly higher PSNR.

\begin{figure}[t]
\begin{minipage}[b]{.490\linewidth}
  \centering
\centerline{\includegraphics[width=4.8cm]{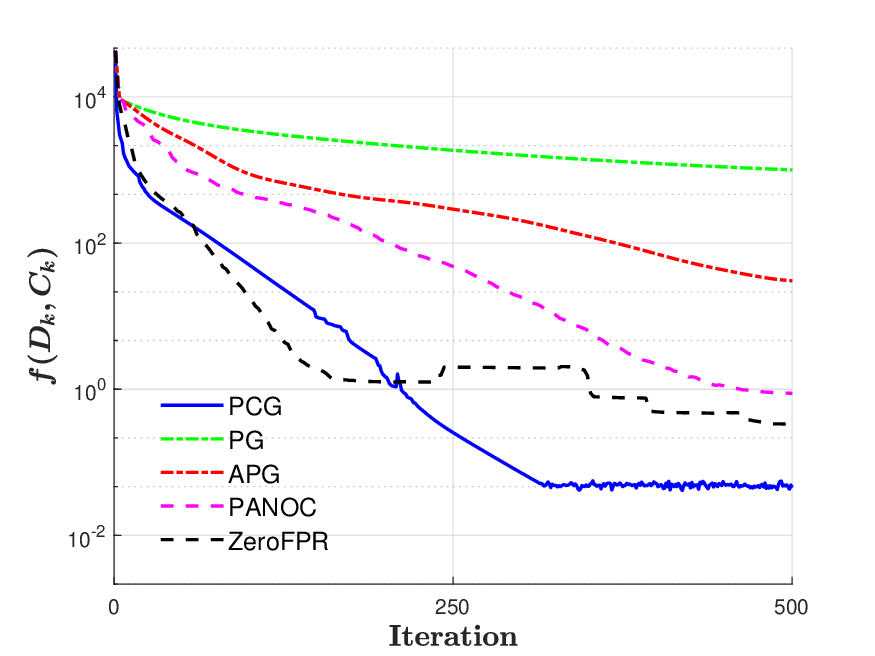}}
\end{minipage}
\hfill
\begin{minipage}[b]{0.490\linewidth}
  \centering
\centerline{\includegraphics[width=4.8cm]{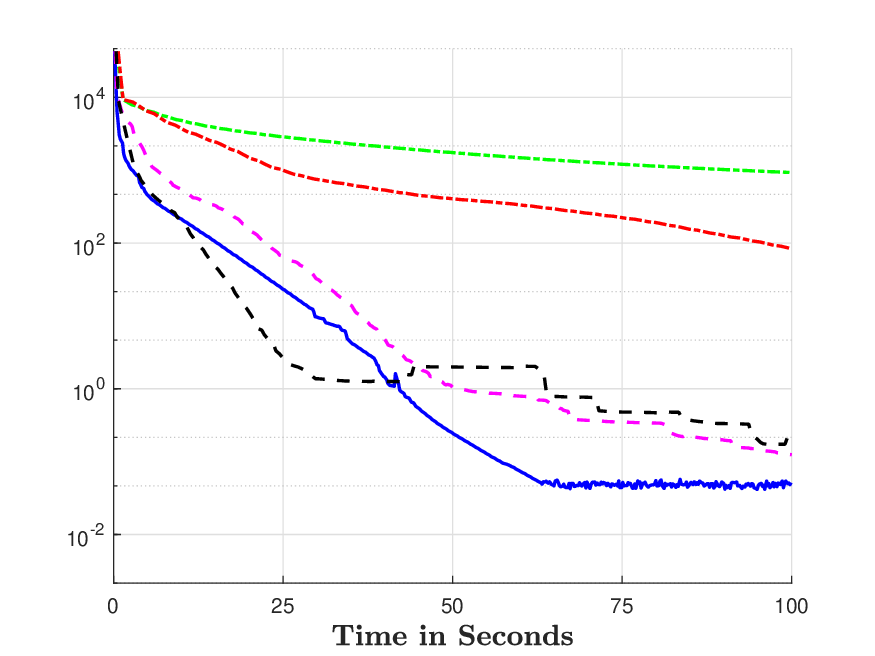}}
\end{minipage}
\begin{minipage}[b]{.490\linewidth}
  \centering
\centerline{\includegraphics[width=4.8cm]{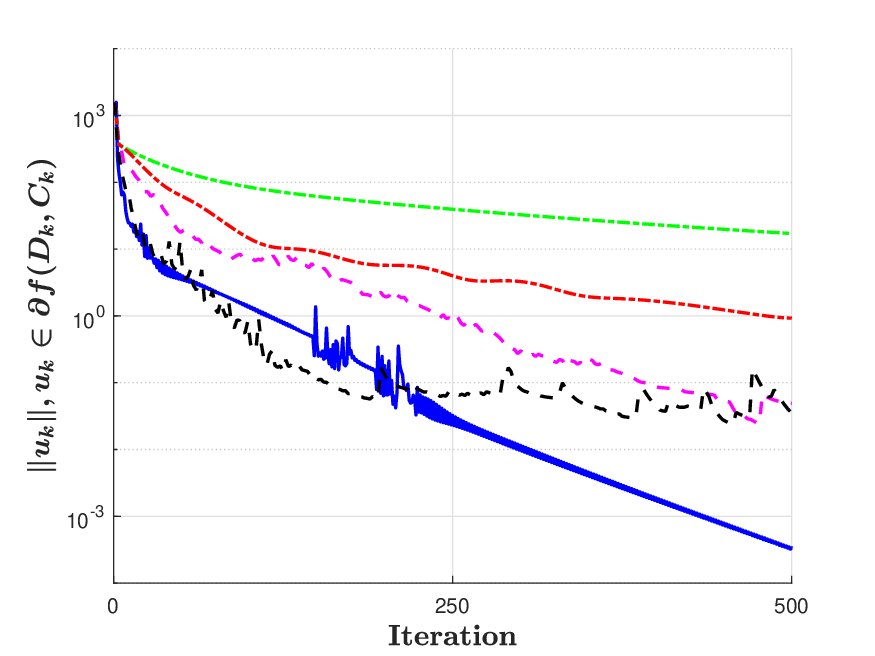}}
\end{minipage}
\hfill
\begin{minipage}[b]{0.490\linewidth}
  \centering
\centerline{\includegraphics[width=4.8cm]{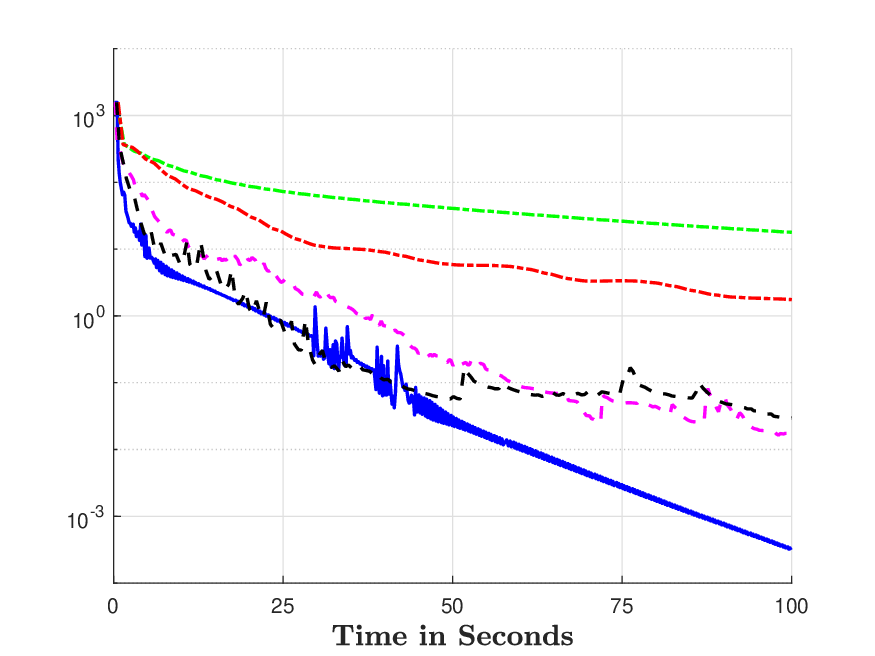}}
\end{minipage}

\caption{\textit{Top:} Objective value versus iterations and running time. \textit{Bottom:} Norm of sub-gradient comparison, where sub-gradient follows \cite{beck2017first}[Definition 10.5]. }
\label{fig:3}
\end{figure}

\textbf{Dictionary Learning:} We next consider a bilinear inverse problem, where the data-fidelity term $q$ is nonconvex. A prototypical case is dictionary learning \cite{aharon2006k,dai2012simultaneous}:
\begin{equation}
    \min_{\bm{D},\bm{C}}~\frac{1}{2}\left\|\bm{D}\bm{C} -\bm{Y}\right\|_F^2 + \sum_{i=1}^r \iota_{\left\{\left\|\bm{d}_i\right\|_2=1\right\}}\left(\bm{d}_i\right)+\sum_{j=1}^n \iota_{\left\{\left\|\bm{c}_j\right\|_0 \leq k\right\}}\left(\bm{c}_j\right),\nonumber
\end{equation}
where $\bm D=[\bm d_1,\ldots,\bm d_r]$, $\bm C=[\bm c_1,\ldots,\bm c_n]$, and $\bm Y\in\mathbb R^{m\times n}$.
Here $\iota_{\mathcal S}(\cdot)$ denotes the indicator of a set $\mathcal S$, i.e.,
$\iota_{\mathcal S}(x)=0$ if $x\in\mathcal S$ and $+\infty$ otherwise.
Thus $\iota_{\{\|\bm d_i\|_2=1\}}(\bm d_i)$ enforces unit-norm atoms and
$\iota_{\{\|\bm c_j\|_0\le k\}}(\bm c_j)$ enforces at-most-$k$ sparsity per code column. Its proximal operator decouples columnwise: it projects $\bm{D}$ by normalizing each atom to unit $\ell_2$ norm and projects $\bm{C}$ by keeping, in each column, the $k$ largest-magnitude entries and zeroing the rest. $\bm{D}_{true}$ is randomly generated Gaussian matrix with unit $\ell_2$ norm columns. Each column of $\bm{C}_{true}$ contains $k$ non-zeros with uniformly random support and i.i.d. standard-Gaussian values. We consider $m=250$, $r=500$, $n=1000$, and $k=10$. Figure \ref{fig:3} illustrates that our method achieves lower reconstruction error with fewer iterations and reduced runtime. It also avoids the late-iteration variable oscillations observed in competing methods.

\section{Conclusions}

This paper develops a proximal CG method for nonconvex nonsmooth optimization. Leveraging CG iterates, we estimate a stepsize via the Lanczos process and choose a descent direction via a tailored majorization strategy. Numerical results on image processing and bilinear inverse problems show fast convergence, low computational cost, and strong reconstruction quality.

\printbibliography

@article{zou2006adaptive,
  title={The adaptive lasso and its oracle properties},
  author={Zou, Hui},
  journal={Journal of the American statistical association},
  volume={101},
  number={476},
  pages={1418--1429},
  year={2006},
  publisher={Taylor \& Francis}
}

@article{abramovich2018high,
  title={High-dimensional classification by sparse logistic regression},
  author={Abramovich, Felix and Grinshtein, Vadim},
  journal={IEEE Transactions on Information Theory},
  volume={65},
  number={5},
  pages={3068--3079},
  year={2018},
  publisher={IEEE}
}

@article{fan2001variable,
  title={Variable selection via nonconcave penalized likelihood and its oracle properties},
  author={Fan, Jianqing and Li, Runze},
  journal={Journal of the American statistical Association},
  volume={96},
  number={456},
  pages={1348--1360},
  year={2001},
  publisher={Taylor \& Francis}
}

@article{blumensath2009iterative,
  title={Iterative hard thresholding for compressed sensing},
  author={Blumensath, Thomas and Davies, Mike E},
  journal={Applied and computational harmonic analysis},
  volume={27},
  number={3},
  pages={265--274},
  year={2009},
  publisher={Elsevier}
}

@article{zhang2010nearly,
  title={Nearly unbiased variable selection under minimax concave penalty},
  author={Zhang, Cun-Hui},
  year={2010}
}

@article{ting2009sparse,
  title={Sparse image reconstruction for molecular imaging},
  author={Ting, Michael and Raich, Raviv and Hero, Alfred O},
  journal={IEEE Transactions on Image Processing},
  volume={18},
  number={6},
  pages={1215--1227},
  year={2009},
  publisher={IEEE}
}

@article{chi2019nonconvex,
  title={Nonconvex optimization meets low-rank matrix factorization: {A}n overview},
  author={Chi, Yuejie and Lu, Yue M and Chen, Yuxin},
  journal={IEEE Transactions on Signal Processing},
  volume={67},
  number={20},
  pages={5239--5269},
  year={2019},
  publisher={IEEE}
}

@article{yao2025low,
  title={A Low-rank Projected Proximal Gradient Method for Spectral Compressed Sensing},
  author={Yao, Xi and Dai, Wei},
  journal={IEEE Transactions on Signal Processing},
  year={2025},
  publisher={IEEE}
}

@inproceedings{zhou2025efficient,
  title={Efficient Gridless Wideband Direction-of-Arrival Estimation From Many Frequencies},
  author={Zhou, Yiming and Fu, Huayu and Dai, Wei},
  booktitle={ICASSP 2025-2025 IEEE International Conference on Acoustics, Speech and Signal Processing (ICASSP)},
  pages={1--5},
  year={2025},
  organization={IEEE}
}

@article{dai2012geometric,
  title={A geometric approach to low-rank matrix completion},
  author={Dai, Wei and Kerman, Ely and Milenkovic, Olgica},
  journal={IEEE Transactions on Information Theory},
  volume={58},
  number={1},
  pages={237--247},
  year={2012},
  publisher={IEEE}
}

@article{li2020rank,
  title={Rank-one matrix approximation with {l}p-norm for image inpainting},
  author={Li, Xiao Peng and Liu, Qi and So, Hing Cheung},
  journal={IEEE Signal Processing Letters},
  volume={27},
  pages={680--684},
  year={2020},
  publisher={IEEE}
}

@article{tanner2013normalized,
  title={Normalized iterative hard thresholding for matrix completion},
  author={Tanner, Jared and Wei, Ke},
  journal={SIAM Journal on Scientific Computing},
  volume={35},
  number={5},
  pages={S104--S125},
  year={2013},
  publisher={SIAM}
}

@article{recht2010guaranteed,
  title={Guaranteed minimum-rank solutions of linear matrix equations via nuclear norm minimization},
  author={Recht, Benjamin and Fazel, Maryam and Parrilo, Pablo A},
  journal={SIAM review},
  volume={52},
  number={3},
  pages={471--501},
  year={2010},
  publisher={SIAM}
}

@incollection{combettes2011proximal,
  title={Proximal splitting methods in signal processing},
  author={Combettes, Patrick L and Pesquet, Jean-Christophe},
  booktitle={Fixed-point algorithms for inverse problems in science and engineering},
  pages={185--212},
  year={2011},
  publisher={Springer}
}

@book{beck2017first,
  title={First-order methods in optimization},
  author={Beck, Amir},
  year={2017},
  publisher={SIAM}
}

@article{li2015accelerated,
  title={Accelerated proximal gradient methods for nonconvex programming},
  author={Li, Huan and Lin, Zhouchen},
  journal={Advances in neural information processing systems},
  volume={28},
  year={2015}
}

@article{lee2014proximal,
  title={Proximal {N}ewton-type methods for minimizing composite functions},
  author={Lee, Jason D and Sun, Yuekai and Saunders, Michael A},
  journal={SIAM Journal on Optimization},
  volume={24},
  number={3},
  pages={1420--1443},
  year={2014},
  publisher={SIAM}
}

@article{yue2019family,
  title={A family of inexact {SQA} methods for non-smooth convex minimization with provable convergence guarantees based on the {L}uo--{T}seng error bound property},
  author={Yue, Man-Chung and Zhou, Zirui and So, Anthony Man-Cho},
  journal={Mathematical Programming},
  volume={174},
  number={1},
  pages={327--358},
  year={2019},
  publisher={Springer}
}

@article{kanzow2021globalized,
  title={Globalized inexact proximal {N}ewton-type methods for nonconvex composite functions},
  author={Kanzow, Christian and Lechner, Theresa},
  journal={Computational Optimization and Applications},
  volume={78},
  number={2},
  pages={377--410},
  year={2021},
  publisher={Springer}
}

@article{themelis2018forward,
  title={Forward-backward envelope for the sum of two nonconvex functions: Further properties and nonmonotone linesearch algorithms},
  author={Themelis, Andreas and Stella, Lorenzo and Patrinos, Panagiotis},
  journal={SIAM Journal on Optimization},
  volume={28},
  number={3},
  pages={2274--2303},
  year={2018},
  publisher={SIAM}
}

@article{stella2017forward,
  title={Forward--backward quasi-{N}ewton methods for nonsmooth optimization problems},
  author={Stella, Lorenzo and Themelis, Andreas and Patrinos, Panagiotis},
  journal={Computational Optimization and Applications},
  volume={67},
  number={3},
  pages={443--487},
  year={2017},
  publisher={Springer}
}

@inproceedings{stella2017simple,
  title={A simple and efficient algorithm for nonlinear model predictive control},
  author={Stella, Lorenzo and Themelis, Andreas and Sopasakis, Pantelis and Patrinos, Panagiotis},
  booktitle={2017 IEEE 56th Annual Conference on Decision and Control (CDC)},
  pages={1939--1944},
  year={2017},
  organization={IEEE}
}

@article{zhou2024proximal,
  title={Proximal dogleg opportunistic majorization for nonconvex and nonsmooth optimization},
  author={Zhou, Yiming and Dai, Wei},
  journal={arXiv preprint arXiv:2402.19176},
  year={2024}
}

@book{nocedal2006numerical,
  title={Numerical optimization},
  author={Nocedal, Jorge and Wright, Stephen J},
  year={2006},
  publisher={Springer}
}

@article{liu2024inexact,
  title={An inexact regularized proximal {N}ewton method for nonconvex and nonsmooth optimization},
  author={Liu, Ruyu and Pan, Shaohua and Wu, Yuqia and Yang, Xiaoqi},
  journal={Computational Optimization and Applications},
  volume={88},
  number={2},
  pages={603--641},
  year={2024},
  publisher={Springer}
}

@article{baraldi2023proximal,
  title={A proximal trust-region method for nonsmooth optimization with inexact function and gradient evaluations},
  author={Baraldi, Robert J and Kouri, Drew P},
  journal={Mathematical Programming},
  volume={201},
  number={1},
  pages={559--598},
  year={2023},
  publisher={Springer}
}

@misc{imagej_mri_stack,
  title        = {{MRI} {S}tack (sample image)},
  author       = {{ImageJ, National Institutes of Health}},
  howpublished = {\url{https://imagej.net/ij/images/mri-stack.zip}},
  note         = {Public domain},
  year         = {n.d.},
  urldate      = {2025-09-08}
}

@book{liesen2013krylov,
  title={Krylov subspace methods: principles and analysis},
  author={Liesen, J{\"o}rg and Strakos, Zdenek},
  year={2013},
  publisher={Numerical Mathematics and Scientific Computing}
}

@article{aharon2006k,
  title={{K-SVD}: {A}n algorithm for designing overcomplete dictionaries for sparse representation},
  author={Aharon, Michal and Elad, Michael and Bruckstein, Alfred},
  journal={IEEE Transactions on signal processing},
  volume={54},
  number={11},
  pages={4311--4322},
  year={2006},
  publisher={IEEE}
}

@article{dai2012simultaneous,
  title={Simultaneous codeword optimization {(SimCO)} for dictionary update and learning},
  author={Dai, Wei and Xu, Tao and Wang, Wenwu},
  journal={IEEE Transactions on Signal Processing},
  volume={60},
  number={12},
  pages={6340--6353},
  year={2012},
  publisher={IEEE}
}

@article{breheny2011coordinate,
  title={Coordinate descent algorithms for nonconvex penalized regression, with applications to biological feature selection},
  author={Breheny, Patrick and Huang, Jian},
  journal={The annals of applied statistics},
  volume={5},
  number={1},
  pages={232},
  year={2011}
}

@inproceedings{li2017convergence,
  title={Convergence analysis of proximal gradient with momentum for nonconvex optimization},
  author={Li, Qunwei and Zhou, Yi and Liang, Yingbin and Varshney, Pramod K},
  booktitle={International Conference on Machine Learning},
  pages={2111--2119},
  year={2017},
  organization={PMLR}
}

@article{frankel2015splitting,
  title={Splitting methods with variable metric for {K}urdyka--{\L}ojasiewicz functions and general convergence rates},
  author={Frankel, Pierre and Garrigos, Guillaume and Peypouquet, Juan},
  journal={Journal of Optimization Theory and Applications},
  volume={165},
  number={3},
  pages={874--900},
  year={2015},
  publisher={Springer}
}

@article{lee2000algorithms,
  title={Algorithms for non-negative matrix factorization},
  author={Lee, Daniel and Seung, H Sebastian},
  journal={Advances in neural information processing systems},
  volume={13},
  year={2000}
}

@article{tovsic2011dictionary,
  title={Dictionary learning},
  author={To{\v{s}}i{\'c}, Ivana and Frossard, Pascal},
  journal={IEEE Signal Processing Magazine},
  volume={28},
  number={2},
  pages={27--38},
  year={2011},
  publisher={IEEE}
}

@article{nesterov2013gradient,
  title={Gradient methods for minimizing composite functions},
  author={Nesterov, Yu},
  journal={Mathematical programming},
  volume={140},
  number={1},
  pages={125--161},
  year={2013},
  publisher={Springer}
}

@article{bright2025matrix,
  title={Matrix Calculus (for Machine Learning and Beyond)},
  author={Bright, Paige and Edelman, Alan and Johnson, Steven G},
  journal={arXiv preprint arXiv:2501.14787},
  year={2025}
}

\end{document}